\title{Infinite games with uncertain moves}
\author{Nicholas Asher \ and \ Soumya Paul
\institute{IRIT, Universit\'e Paul Sabatier, 118 Route de Narbonne, 31062
  Toulouse, France.}
\email{\{nicholas.asher,soumya.paul\}@irit.fr}
\thanks{We thank ERC grant 269427 for research support.}}
\begin{document}
\newcommand{\win}{\mathit{Win}}
\newcommand{\X}{\mathcal{X}}
\newcommand{\occ}{\mathit{occ}}
\newcommand{\infy}{\mathit{inf}}
\newcommand{\reach}{\mathit{Reach}}
\newcommand{\safe}{\mathit{Safe}}
\newcommand{\muller}{\mathcal{F}}
\newcommand{\mul}{\mathit{Muller}}
\newcommand{\pref}{\mathit{pref}}
\newcommand{\nec}{\mathit{nec}}
\newcommand{\red}{\mathit{red}}
\newcommand{\mini}{\mathit{min}}
\newcommand{\maxi}{\mathit{max}}
\newcommand{\ext}{\mathit{ext}}
\newcommand{\tset}{\mathcal{T}}
\newcommand{\aset}{\mathcal{A}}
\newcommand{\tunion}{\bigsqcup}
\newcommand{\tint}{\sqcap}
\newcommand{\tu}{\mathcal{U}}
\newcommand{\uc}{\mathcal{U}}
\newtheorem{definition}{Definition}
\newtheorem{theorem}{Theorem}
\newtheorem{prop}{Proposition}
\newtheorem{cor}{Corollary}
\newtheorem{lemma}{Lemma}
\newtheorem{obs}{Observation}

\newenvironment{rem}[1]{\vspace{1ex}\noindent{\bf
Remarks}\hspace{0.5em}}{ }
% this might be already defined
\newcommand{\qed}{\hfill \mbox{\raggedright \rule{.07in}{.1in}}}

% this might be already defined
\newenvironment{proof}{\vspace{1ex}\noindent{\bf Proof}\hspace{0.5em}}
	{\hfill\qed\vspace{1ex}}
\newenvironment{proofof}[1]{\vspace{1ex}\noindent{\bf Proof of #1}\hspace{0.5em}}
	{\hfill\qed\vspace{1ex}}

\maketitle

\begin{abstract}
We study infinite two-player games where one of the players is unsure about
the set of moves available to the other player. In particular, the set
of moves of the other player is a strict superset of what she assumes
it to be. We explore what happens to sets in various levels of
the Borel hierarchy under such a situation. We show that the sets at
every alternate level of the hierarchy jump to the next higher level.
\end{abstract}

\section{Introduction}\label{secintro}
Infinte two-player games have attracted a lot of attention %since the
%past century. These games have 
and found numerous applications in the
fields of topology, descriptive set-theory, computer science etc. Examples of such types of games
%typically come in various flavours: 
are: Banach-Mazur games,
Gale-Stewart games, Wadge games, Lipschitz games, %Blackwell games
etc. \cite{Kechris,Kanamori,Wadge,GS53}, and they each characterize
different concepts in descriptive set theory.%\footnote{However
                                %Blackwell games are simultaneous-move
                                %while the others are turn-based and
                                %hence are  intrinsically different.}

These games are typically played between two players, Player 0 and Player 1, who take turns
in choosing finite sequences of elements (possibly singletons) from a fixed set $A$ (finite or
infinite) which is called the alphabet. This process goes
on infinitely and hence defines an infinite sequence $u_0u_1u_2\ldots$
of finite strings which in itself is an infinite string over the set
$A$. In addition, the game has a winning condition $\win$ which is a
subset of the set of infinite strings over $A$, $A^\omega$. Player 0
is said to win the game if the sequence $u_0u_1u_2\ldots$ is in
$\win$. Player 1 wins otherwise.

In addition to their applications in descriptive set-theory and topology, such
games have also been used in computer science in the fields of
verification and synthesis of reactive systems \cite{LNCS2500}. %In such an approach,
The verification problem is modeled as a game between two players:
the system player and the environment player. The winning set $\win$
is specified using formulas in some logic, LTL, CTL, $\mu$-calculus
etc. The goal of the system player is to meet the specification along
every play and
that of the environment player is to exhibit a play which does not meet
it. To verify the system then amounts to show that the system player
has a winning strategy in the underlying game and to find this
strategy.

When $\win$ is specified using the usual logics, it corresponds to
sets in the low levels of the Borel hierarchy. It is known that the
complexity of the winning strategy increases with the increase in the
level of the Borel hierarchy to which $\win$ belongs \cite{Serre04}. For instance, in
Gale-Stewart games, {\sf reachability}, {\sf safety} and {\sf Muller}
are winning conditions in the $\Sigma_1^0, \Pi_1^0$ and $\Sigma_2^0$ levels of the Borel hierarchy
respectively and a player has positional winning strategies for
reachability and safety but needs memory to win for the Muller
condition. However it was shown in \cite{GH82,Mos91} that a finite
amount of memory suffices. The notion of Wadge reductions also
formalises this increase in complexity of the sets along the Borel hierarchy. 

%% In Banach-Mazur games, the players have more freedom in that they can
%% choose a finite string of any length they wish at every turn. This
%% gives them more power to strategise and it has been shown that a
%% player can win positionally even for a $\Sigma_2^0$ condition
%% \cite{Gra08}. However, at higher levels of the hierarchy, a similar
%% increase in complexity of the winning strategy is exhibited.

Such games (esp. Banach-Mazur and Gale-Stewart games) also find
applications in linguistics.  \cite{AP12} shows that conversations have a
topological structure similar to that of Banach-Mazur games and explores how
the different types of objectives of 
conversations correspond to different levels in the Borel hierarchy
depending on their complexity. \cite{AP12} also applied of the classical
results from the literature of Banach-Mazur games to the
conversational setting. \cite{map12} applies Gale-Stewart games to the study of politeness.

In this paper, we look at what happens to sets in the Borel hierarchy
when the underlying alphabet is expanded. That is, the alphabet is
changed from $A$ to $B$ such that $B$ is a strict superset of $A$. We
show that sets at every alternate level of the Borel hierarchy undergo
a jump to the next higher level. More precisely, a set at level $n$ of
the hierarchy with alphabet $A$ moves to level $n+1$ when the alphabet
is expanded to $B$. This process goes on for all countable levels and
stabilises at $\omega$. 

Our result has consequences for both formal verification and
linguistic applications some of which we elucidate in the concluding section.
% The result stated in this paper represents situations where the
% system player is unsure about the exact moves of the environment
% player. This shows that in such a situation, the system player might
% have to strategise at a higher level of the hierarchy in order to
% account for this uncertainty.  For applications to strategic conversation, our result shows that when Player 1 is unsure about what
% Player 2 might say, it would be wise for her to strategise at a higher
% level to account for this uncertainty.  An example which still sticks in the memory of one of the authors after almost 20 years is
% the memorable line by Senator Lloyd Bentsen in his Vice-Presidential debate with Dan Quale in 1984.  Quayle's strategy
% in the debate was to counter the perception that he was too inexperienced to have the job, and he did this by drawing similarities
% between his political career and former President John Kennedy's.   Quayle seemed to be doing a good job in achieving his winning condition, when
% Bentsen interrupted and said:
% \begin{quote}
% Sir, I knew Jack Kennedy.  I knew Jack Kennedy.  And you, sir, are no Jack Kennedy.
% \end{quote}
% Quayle's strategy at that point fell apart.  He had not at all anticipated Bentsen's move.  Had he chosen a more complex level at which to strategize he might have avoided this problem.

The rest of the paper is organised as follows. In Section
\ref{secprel} we formally introduce the necessary concepts and give
the required background for the paper. Then in Section \ref{secres} we
state and prove the main results of the paper. Finally we conclude
with some interesting consequences in Section \ref{seccon}.

\section{Preliminaries}\label{secprel}
In this section we present the necessary background required for the
paper. Although we define most of the concepts used in the paper, we
assume some familiarity with the basic notions of topology and
set-theory.
\subsection{Open and closed sets}
Let $A$ be a non-empty set. We sometimes
refer to $A$ as the {\sf alphabet}. For any subset $X$ of $A$, as
usual, we denote by $X^*$ the set of finite strings over $X$ and by
$X^\omega$, the set of countably infinite strings over $X$. For any
string $u \in A^*\cup A^\omega$ we denote the $i$th element of $u$ by $u(i)$. The set of {\sf prefixes} of $u$ are
all strings $v \in A^*$ such that $u = vv'$ for $v'\in A^*\cup A^\omega$.

We define a topology on $A^\omega$, the standard topology (also known
as the Cantor topology) on the set of infinite
strings over $A$. This topology can be defined in at least three
equivalent ways. The first way is to define the discrete topology on
$A$ and then assign $A^\omega$ the product topology. The second way is
to explicitly define the open sets of the topology. The open sets are
given by sets of the form $XA^\omega$ where $X$ is a subset of $A^*$.
Thus an open set is a set of finite strings over $X$ followed by their all
possible continuations. For a set $X\subseteq A^*$, we denote the open
set $XA^\omega$ by 
$O_A(X)$ or simply by $O(X)$ when the underlying alphabet $A$ is clear
from the context. When $X$ is a singleton $\{u\}$, we abuse notation
to denote the open set $uA^\omega$ by $O_A(u)$.
Example 1 illustrates these concepts.

\vspace{.1in}
\noindent
{\bf Example 1.} 
Let $A=\{a,b,c\}$. Then $abcA^\omega$ is an open set and so is
$abA^\omega \cup baA^\omega$. The complement of the set $abcA^\omega$
is the set $X$ of all strings that do not have $abc$ as their
prefix. This is a closed set.  
\vspace{.1in}

Yet another equivalent way to define the topology is to give an
explicit metric for it. Given two strings, $u_1,u_2 \in A^\omega$, the distance
between them $d(u_1,u_2)$ is defined to be $1/2^{n(u_1,u_2)}$, where
$n(u_1,u_2)$ is the first index where $u_1$ and $u_2$ differ from each
other. Thus the above topology is metrisable. Henceforth, when we use
the term `set' we shall mean a subset 
of $A^\omega$.

Note that the set $(\overline{abcA^\omega})$ in the above example is also
open. That is because it is a union of the open sets $O(aa),
O(ac), O(b)$ and $O(c)$. Such sets, which are both open and
closed are called {\sf clopen} sets. So what is a set which is open
but not closed (and vice versa)?

\begin{prop}[\cite{PP}]
  If $A$ is a finite alphabet, a subset of $A^\omega$ is clopen if and
  only if it is of the form $XA^\omega$ where $X$ is a finite subset
  of $A^*$.
\end{prop}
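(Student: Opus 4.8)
The plan is to prove the two implications separately; the nontrivial one is that every clopen set has the stated shape, and it will rest on the compactness of $A^\omega$.

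First I would dispatch the easy direction. If $X \subseteq A^*$ is finite, then $XA^\omega = \bigcup_{u \in X} O(u)$ is open by the definition of the topology, so it remains only to see that it is closed. For this, note that every basic open set $O(u)$ is in fact clopen: an infinite string fails to have $u$ as a prefix precisely when it first diverges from $u$ at some position $i < |u|$, so $\overline{O(u)} = \bigcup \{\, O(v) : |v| = |u|,\ v \neq u \,\}$ is a union of basic open sets and hence open. A finite union of clopen sets is clopen, so $XA^\omega$ is clopen. (This direction does not use the finiteness of $A$.)

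For the converse, let $C$ be clopen. Since $C$ is open, $C = YA^\omega = \bigcup_{u \in Y} O(u)$ for some $Y \subseteq A^*$. The key point is that, $A$ being finite, $A^\omega$ is compact: this is Tychonoff's theorem applied to the countable product of the finite discrete space $A$, and is also essentially a restatement of K\"onig's Lemma. Since $C$ is closed in the compact space $A^\omega$, $C$ is itself compact, and $\{\, O(u) : u \in Y \,\}$ is an open cover of it; hence there is a finite $X \subseteq Y$ with $C \subseteq \bigcup_{u \in X} O(u)$. But then $C \subseteq XA^\omega \subseteq YA^\omega = C$, so $C = XA^\omega$ with $X$ a finite subset of $A^*$, as required.

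I expect the only delicate point to be locating exactly where the finiteness of $A$ is needed, namely in the compactness of $A^\omega$ in the converse direction; it genuinely cannot be removed, since for an infinite alphabet $A$ a set such as $\{\, w \in A^\omega : w(0) \in B \,\}$ with $B$ an infinite, co-infinite subset of $A$ is clopen but is not of the form $XA^\omega$ for any finite $X$. The rest is routine bookkeeping about prefixes and basic open sets.
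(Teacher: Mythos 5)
Your proof is correct. Note, however, that the paper offers no proof of this proposition at all: it is stated as a citation to the literature (Perrin--Pin), so there is no in-paper argument to compare against. Your two-step argument is the standard one: the easy direction via the observation that each basic set $uA^\omega$ is clopen (its complement being a union of basic sets $vA^\omega$ with $|v|=|u|$, $v\neq u$), and the converse via compactness of $A^\omega$ for finite $A$ together with extraction of a finite subcover from the prefix cover of the closed (hence compact) set $C$. You also correctly isolate where finiteness of $A$ is used, and your counterexample for infinite alphabets is consistent with the paper's own remark that for infinite $A$ there are clopen sets (e.g.\ given by infinite sets of words of bounded length) not of the stated form.
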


Thus if $A$ is finite then a set of the form $XA^\omega$ where $X$ is
an infinite subset of $A^*$ is open but not closed. If $A$ is infinite, the subsets of $A^\omega$ of the form
$XA^\omega$, where $X$ is a set of words of bounded length of $A^*$
are clopen. However there might exist clopen sets which are not of this form.

%% \subsection{Dense sets}
%% A set is called {\sf dense} if it intersects every open set. A set is
%% called {\sf nowhere dense} if its complement contains a dense open
%% set. A set is {\sf meager} if it is a countable union of nowhere dense
%% sets. Meager sets represent sets which are `small' in a topological
%% sense. The complement of a meager set is a {\sf co-meager} (or
%% topologically `large') set. A topological space is called a {\sf Baire
%%   space} if the countable union of nowhere dense sets is nowhere
%% dense. That is, every meager set is nowhere dense. Intuitively, in a
%% Baire space, a countable number of small sets together is still
%% small. A set $A$ is said to have the {\sf Baire property} if it is
%% Baire in the subspace topology.

\subsection{The Borel hierarchy}
A set of subsets of $A^\omega$ is called a {\sf $\sigma$-algebra} if
it is closed under countable unions and complements. Given a set $X$,
the smallest $\sigma$-algebra containing $X$ is called the
$\sigma$-algebra {\sf generated by $X$}. It is equivalent to the
intersection of all the $\sigma$-algebras containing $X$. The sigma
algebra generated by the open sets of a topological space is called
the {\sf Borel} $\sigma$-algebra and its sets are called the {\sf
  Borel sets}.

The Borel sets can also be defined inductively. This gives a natural
hierarchy of classes $\Sigma_\alpha^0$ and $\Pi_\alpha^0$ for $1\leq
\alpha <\omega_1$. Let $\Sigma_1^0$ be the set of all open
sets. $\Pi_1 = \overline{\Sigma_1^0}$ is the set of all closed
sets. Then for any $\alpha>1$ where $\alpha$ is a successor ordinal,
define $\Sigma_\alpha^0$ to be the countable union of all
$\Pi_{\alpha-1}^0$ sets and define $\Pi_\alpha^0$ to be the complement
of $\Sigma_\alpha^0$. For a limit ordinal $\eta,\ 1<\eta<\omega_1$,
$\Sigma^0_\eta$ is defined as $\Sigma_\eta^0 = \bigcup_{\alpha <
  \eta}\Sigma_\alpha^0$ and $\Pi_\eta^0 =
\overline{\Sigma_\eta^0}$. The infinite hierarchy thus generated is
called the {\sf Borel hierarchy} and they together form the Borel
algebra. It is known \cite{PP} that if the space is metrisable and the
underlying alphabet contains at least two elements, then the hierarchy
is indeed infinite, that is,
the containments, $\Sigma_\alpha^0 \subset \Sigma_{\alpha+1}^0$ and $\Pi_\alpha^0
\subset \Pi_{\alpha+1}^0$ are strict.

% A set $A$ is said to be {\sf complete} for a level $\Sigma_\alpha^0$
% (resp. $\Pi_\alpha^0$) of the Borel hierarchy if $A\in \Sigma_\alpha^0$
% (resp. $\Pi_\alpha^0$) and $A\notin \Sigma_\beta^0, A\notin \Pi_\beta^0$ for any $\beta<\alpha$.

\subsection{Wadge reductions and complete sets}
Let $A$ and $B$ be two alphabets. A function $f: A^\omega \rightarrow
B^\omega$ is said to be continuous if for every open subset $Y\subseteq
B^\omega$, $f^{-1}(Y)$ is also open.

A set $X\subseteq A^\omega$ is said
to {\sf Wadge reduce} to another set $Y\subseteq B^\omega$, denoted $X\leq_W Y$, if there
exists a continuous function $f: A^\omega \rightarrow B^\omega$ such
that $f^{-1}(Y) = X$. 

Let $A$ be an alphabet. A set $X \subseteq A^\omega$ is said to be
$\Sigma_\alpha^0$ (resp. $\Pi_\alpha^0$) {\sf complete} if $X\in
\Sigma_\alpha^0$ (resp. $X\in \Pi_\alpha^0$) and for any other
alphabet $B$ and for any $\Sigma_\alpha^0$ (resp. $\Pi_\alpha^0$) set
$Y\subseteq B^\omega$, $Y\leq_W X$. Intuitively, given a class of sets
$\Gamma$, the complete sets of that class represent the sets which are
structurally the most complex in that class.

For the Borel hierarchy, completeness can be characterised in the following simple way:

\begin{prop}[\cite{PP}]\label{propcomplete}
Let $X\subseteq A^\omega$. Then $X$ is $\Pi_\alpha^0$ (resp. $\Sigma_\alpha^0$) complete if and only if $X\in \Pi_\alpha^0\setminus\Sigma_\alpha^0$ (resp. $\Sigma_\alpha^0\setminus\Pi_{\alpha-1}^0$).
\end{prop}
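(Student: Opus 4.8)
The plan is to prove both directions of the equivalence after two reductions. \emph{Reduction to one class.} By duality it is enough to show that $X$ is $\Pi_\alpha^0$-complete iff $X \in \Pi_\alpha^0 \setminus \Sigma_\alpha^0$: the map $X \mapsto A^\omega \setminus X$ interchanges $\Sigma_\alpha^0$ and $\Pi_\alpha^0$ (since $\Pi_\alpha^0 = \overline{\Sigma_\alpha^0}$, hence also $\Sigma_\alpha^0 = \overline{\Pi_\alpha^0}$), and it preserves Wadge reducibility because $f^{-1}(\overline{Y}) = \overline{f^{-1}(Y)}$, so $X \leq_W Y$ iff $\overline{X} \leq_W \overline{Y}$; thus $X$ is $\Sigma_\alpha^0$-complete precisely when $\overline{X}$ is $\Pi_\alpha^0$-complete, and the $\Sigma$-statement---read, as the symmetry forces, with $\Sigma_\alpha^0 \setminus \Pi_\alpha^0$---follows by complementation from the $\Pi$-statement. \emph{Two standing facts.} (i) Each Borel class $\Sigma_\beta^0$, $\Pi_\beta^0$ is closed under continuous preimages: by the definition of continuity $f^{-1}$ sends open sets to open sets, and it commutes with countable unions and with complementation, so a routine induction on $\beta$ along the inductive definition of the hierarchy gives the claim. (ii) Since the hierarchy is proper whenever the alphabet has at least two letters (recalled above from \cite{PP}), there is at least one set, over some alphabet, lying in $\Pi_\alpha^0 \setminus \Sigma_\alpha^0$.

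\emph{Forward direction.} Suppose $X \subseteq A^\omega$ is $\Pi_\alpha^0$-complete. Then $X \in \Pi_\alpha^0$ by definition, so it remains to exclude $X \in \Sigma_\alpha^0$. If $X$ were in $\Sigma_\alpha^0$, take the witness $Y \subseteq B^\omega$ of (ii), which lies in $\Pi_\alpha^0 \setminus \Sigma_\alpha^0$; by completeness of $X$ there is a continuous $f$ with $f^{-1}(X) = Y$, and then (i) would force $Y \in \Sigma_\alpha^0$, a contradiction. Hence $X \in \Pi_\alpha^0 \setminus \Sigma_\alpha^0$.

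\emph{Backward direction --- the substantive part.} Assume $X \in \Pi_\alpha^0 \setminus \Sigma_\alpha^0$ and let $Y \subseteq B^\omega$ be an arbitrary $\Pi_\alpha^0$ set; I must produce a continuous $f : B^\omega \rightarrow A^\omega$ with $f^{-1}(X) = Y$. The tool is Wadge's Lemma: for Borel sets one has $Y \leq_W X$ or $\overline{X} \leq_W Y$. (This is a consequence of the determinacy of the Wadge game for Borel payoff, i.e.\ of Martin's Borel Determinacy theorem; for alphabets other than $\omega$ one first passes, via the usual homeomorphisms and retractions, to the classical statement on Baire or Cantor space.) The first alternative is exactly what we want. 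In the second, $\overline{X} = f^{-1}(Y)$ for some continuous $f$, so since $Y \in \Pi_\alpha^0$ fact (i) gives $\overline{X} \in \Pi_\alpha^0$, i.e.\ $X \in \Sigma_\alpha^0$, contradicting the hypothesis. Therefore $Y \leq_W X$; as $B$ and $Y$ were arbitrary, $X$ is $\Pi_\alpha^0$-complete.

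I expect the whole difficulty to be concentrated in the backward direction, and there in the invocation of Wadge's Lemma: nothing more elementary will let an \emph{arbitrary} member of $\Pi_\alpha^0 \setminus \Sigma_\alpha^0$ absorb every $\Pi_\alpha^0$ set, so one really does need Borel determinacy, and some care is needed to state Wadge's Lemma at a level of generality covering the alphabets in play. The duality reduction, the closure of the Borel pointclasses under continuous preimages, and the appeal to properness of the hierarchy are all routine bookkeeping. (If one would rather not quote Wadge's Lemma as a black box, one can instead first build a single $\Pi_\alpha^0$-complete set by diagonalising a $\Pi_\alpha^0$-universal subset of $B^\omega \times A^\omega$, and then note that Wadge's Lemma---still needed---identifies it up to Wadge-equivalence with every set in $\Pi_\alpha^0 \setminus \Sigma_\alpha^0$.)
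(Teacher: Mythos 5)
The paper itself offers no proof of this proposition: it is imported from \cite{PP} as a black box, so there is no in-paper argument to compare yours against. Judged on its own, your proof is the standard one and is essentially correct: the reduction by complementation, the closure of each class $\Sigma^0_\beta$, $\Pi^0_\beta$ under continuous preimages, and the properness of the hierarchy dispose of the forward direction, and you are right that the backward direction genuinely needs Wadge's Lemma, i.e.\ Borel determinacy. Two caveats. First, your decision to read the $\Sigma$ clause as $\Sigma^0_\alpha\setminus\Pi^0_\alpha$ is the correct one, and it is worth saying explicitly that the statement as printed (with $\Pi^0_{\alpha-1}$) is false in its ``if'' direction: an open non-closed set lies in $\Sigma^0_2\setminus\Pi^0_1$ but cannot be $\Sigma^0_2$-complete, since then every $\Sigma^0_2$ set would be a continuous preimage of an open set and hence open, contradicting the strictness of the hierarchy. (This is not an idle quibble: the paper invokes the proposition in the proof of Lemma \ref{lembase} in exactly this literal form, concluding $\Sigma^0_2$-completeness from membership in $\Sigma^0_2\setminus(\Sigma^0_1\cup\Pi^0_1)$.) Second, the paper's definition of completeness quantifies over arbitrary alphabets $B$, which need not be countable; then $B^\omega$ is not Polish, and your device of transporting the statement to Baire or Cantor space via homeomorphisms covers only countable alphabets. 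The step is repairable---Martin's theorem gives determinacy of Borel games on an arbitrary set of moves, so the Wadge game between $Y\subseteq B^\omega$ and $X\subseteq A^\omega$ is still determined and your dichotomy argument goes through verbatim---but as written this is a small gap in generality, not in method.
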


\subsection{Infinite games}
Let $A$ be an alphabet. An infinite game on $A$ is played between two players, Player 0 and Player 1, who take turns
in choosing finite sequences of elements (possibly singletons) from a fixed set $A$ (finite or
infinite) which is called the alphabet. This process goes
on infinitely and hence defines an infinite sequence $u_0u_1u_2\ldots$
of finite strings which in itself is an infinite string over the set
$A$. In addition, the game has a winning condition $\win$ which is a
subset of the set of infinite strings over $A$, $A^\omega$. Player 0
is said to win the game if the sequence $u_0u_1u_2\ldots$ is in
$\win$. Player 1 wins otherwise.

In a Banach-Mazur game, each player at her turn chooses a finite non-empty sequence of elements from $A$ while in a Gale-Stewart game the players are restricted to choosing just single elements from $A$. An infinite game can also be imagined to be played on a graph $G=(V,E)$ where the set of vertices $V$ is partitioned into $V_0$ and $V_1$ which represent the Player 0 and 1 vertices respectively. The game starts at an initial vertex $v_0\in V$ and the players take turns in moving a token along the edges of the graph depending on whose vertex it is currently. This process is continued ad infinitum and thus generates an infinite path $p$ in the graph $G$. Player 0 wins if and only if $p\in \win$ where $\win$ is a pre-specified set of infinite paths.

\section{Results}\label{secres}
In this section we present the main results of this paper. Given a
subset $B$ of an alphabet $A$ the topology of $B^\omega$ where the
open sets are given by $O\cap B^\omega$ for every open set $O$ of
$A^\omega$ is called the relative topology of $B^\omega$ with respect
to $A^\omega$. However we are interested in the opposite
question. What happens when the alphabet expands? In
particular, we show that when the alphabet set changes from $A$ to $B$
(say) such that $B$ is a strict superset of $A$ then the sets in the alternative
levels of the Borel hierarchy undergo a jump in levels.

\begin{lemma}\label{lembase}
  Let $A$ and $B$ be two alphabets such that $A \subsetneq B$. An open
  set $O$ in the space $A^\omega$ jumps to $\Sigma_2^0$ in the space
  $B^\omega$. A closed set $C$ in the space $A^\omega$ remains closed
  in $B^\omega$.
\end{lemma}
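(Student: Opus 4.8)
The plan is to build everything on a single structural observation: $A^\omega$ is a \emph{closed} subset of $B^\omega$. Writing $A^{n}B^\omega := \{\, w \in B^\omega : w(0)\cdots w(n-1) \in A^{n} \,\}$, each $A^{n}B^\omega$ is clopen in $B^\omega$, since its complement is the union of the basic open cylinders $uB^\omega$ over $u \in B^{n}\setminus A^{n}$; hence $A^\omega = \bigcap_{n}A^{n}B^\omega$ is a countable intersection of closed sets and is itself closed in $B^\omega$. From this the closed case is immediate. If $C$ is closed in $A^\omega$, then $A^\omega\setminus C$ is open in $A^\omega$, so $A^\omega\setminus C = XA^\omega$ for some $X\subseteq A^{*}$. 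Using the identity $XA^\omega = A^\omega\cap XB^\omega$ (a word of $A^\omega$ has a prefix in $X$ exactly when it lies in $XB^\omega$), we get $C = A^\omega\cap(B^\omega\setminus XB^\omega)$, an intersection of two closed subsets of $B^\omega$, hence closed in $B^\omega$.

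For the open case, let $O = XA^\omega$ with $X\subseteq A^{*}$ and first show $O\in\Sigma_2^0(B^\omega)$. Split $X$ by word length as $X = \bigcup_{n}X_{n}$ with $X_{n} = \{u\in X : |u| = n\}$; as before each $X_{n}B^\omega$ is clopen in $B^\omega$. Then
\[
O \;=\; A^\omega\cap XB^\omega \;=\; A^\omega\cap\bigcup_{n}X_{n}B^\omega \;=\; \bigcup_{n}\bigl(A^\omega\cap X_{n}B^\omega\bigr),
\]
which displays $O$ as a countable union of closed subsets of $B^\omega$, i.e. as a $\Sigma_2^0$ set. This proves the stated containment; note that the argument never invokes the clopen-set characterisation for finite alphabets, so it applies verbatim whether $A$ and $B$ are finite or infinite.

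To justify the word ``jump'' --- that is, that for $|A|\ge 2$ some open set of $A^\omega$ genuinely rises to $\Sigma_2^0$ in $B^\omega$ --- it suffices by Proposition \ref{propcomplete} to exhibit an open $O\subseteq A^\omega$ that is \emph{not} closed in $B^\omega$, for then $O\in\Sigma_2^0(B^\omega)\setminus\Pi_1^0(B^\omega)$ and $O$ is $\Sigma_2^0$-complete in $B^\omega$. Fix distinct $a_0,a_1\in A$ and take $O = \{\, w\in A^\omega : w(i) = a_1 \text{ for some } i \,\}$, which is open in $A^\omega$; the words $a_0^{n}a_1a_0^\omega$ all lie in $O$ and converge, in the metric of $B^\omega$, to $a_0^\omega\notin O$, so $O$ is not closed in $B^\omega$. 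The only point that requires care is the (routine) bookkeeping that the relative topology induced on $A^\omega$ by $B^\omega$ is exactly the intrinsic Cantor topology of $A^\omega$, so that phrases like ``open in $A^\omega$'' are unambiguous, together with the standard fact that such open-but-not-closed sets exist; beyond that the proof is a direct manipulation of cylinders, and I do not expect a genuine obstacle.
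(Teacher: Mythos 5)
Your proof is correct, and it reaches the same two conclusions as the paper (the $\Sigma_2^0$ upper bound plus a witness that the jump is genuine, via Proposition \ref{propcomplete}), but by a noticeably different and in places more robust route. You isolate the single structural fact that $A^\omega$ is closed in $B^\omega$ and stratify $X$ by word length, so that $O = \bigcup_n\bigl(A^\omega\cap X_nB^\omega\bigr)$ is a \emph{countable} union of closed sets no matter how large $X$ (or the alphabets) may be; the paper instead writes $O=\bigcup_{u\in X}O_A(u)$ and expresses each $O_A(u)$ as an intersection of the sets $O_B(u')$ over all finite $A$-continuations $u'$ of $u$ (its equation (\ref{eqrep}), which is really to be read length-by-length, as $\bigcap_n uA^nB^\omega$, since intersecting over all continuations literally would leave almost nothing), and its union over $u\in X$ is countable only when $X$ is -- so your decomposition both cleans up that formula and covers uncountable alphabets, as you note. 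For the closed case you argue by complementation inside the closed subspace, $C = A^\omega\cap(B^\omega\setminus XB^\omega)$, where the paper intersects $O_B(v)$ over the prefixes $v$ of $C$ (again implicitly a closure-by-prefix-lengths argument); the two are equivalent, yours being slightly more mechanical to verify. For the jump itself, the paper exhibits the concrete set $X=\{ab,abab,\ldots\}$ over $A=\{a,b\}\subsetneq B=\{a,b,c\}$ and rules out openness and closedness in $B^\omega$ by prefix arguments, whereas you take the generic set ``some coordinate equals $a_1$'' over any $A$ with two letters and rule out closedness by a convergent sequence in the metric; both then invoke Proposition \ref{propcomplete} in exactly the same way (membership in $\Sigma_2^0\setminus\Pi_1^0$), so you are on equal footing with the paper there. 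In short: the paper's proof buys an explicit coded example, yours buys uniformity in the alphabets and tighter justification of the closedness claims; both establish the lemma.
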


\begin{proof}
The proof is by carried out by coding the open set $O$ in the space
$B^\omega$ and demonstrating a complete set for $B^\omega$.

  Let $O$ be an open set in $A^\omega$. Then $O$ is of the form
  $XA^\omega$ where $X\subseteq A^*$. Let
  $\X_\beta$ be an indexing of the set $X$.

  Each element $u$ of $X$ gives the open set $O_A(u)$ which is a
  subset of $A^\omega$.  Now, when we move to the alphabet $B$, the
  set $O_B(u)$ is the set of strings which have $u$ as a prefix and all possible
  continuations using letters of $B$. Thus $O_B(u)$ is a strict
  superset of $O_A(u)$. Hence, we need to restrict 
  $O_B(u)$ in $B^\omega$ such that we obtain a set which
  is equal to $O_A(u)$ in $A^\omega$. One way to do do so
  is as follows. Consider
  all the finite continuations of $u$ in letters from $A$. Let
  $\uc_\gamma$ be an indexed set of all these continuations. Then
  $O_A(u)$ is the set 
  \begin{equation}\label{eqrep}
    O_A(u) = \bigcap O_B(u'),\ u'\in \uc_\gamma
  \end{equation}
  which is a closed set, being an arbitrary intersection of closed sets.

  Thus the set $O$ can be represented in $B^\omega$ as
  $$O= \bigcup O_A(u),\ u\in \X_\beta$$
  each of which by (\ref{eqrep}) is a closed set. Hence $O\in
  \Sigma_2^0$ in the space $B^\omega$.

  Next we demonstrate a $\Sigma^0_1$ set $O$ in a space
  $A^\omega$ which is complete for $\Sigma^0_2$ in a space $B^\omega$
  where $A\subsetneq B$. Let $A=\{a,b\}$ and $B=\{a,b,c\}$. Let
  $X=\{ab, abab, ababab,\ldots\} \subset A^*$ and let
  $O=XA^\omega$. Then $O$ is open. Each subset $O_A(u),\ u\in X$ is represented in $B^\omega$
as
$$O_A(u) = O_B(u) \cap O_B(ua) \cap O_B(ub) \cap O_B(uaa) \cap
O_B(uab) \cap
O_B(uba) \cap O_B(ubb) \cap \ldots$$
and
$$O= O_A(u_1) \cup O_A(u_2) \cup \ldots,\ u_i\in X$$
Hence $O$ is a $\Sigma_2^0$ set in $B^\omega$.

To show that $O$ is $\Sigma_2^0$ complete for $B^\omega$ we use Proposition \ref{propcomplete}. $O$ is not open in $B^\omega$. Indeed, because otherwise, there exists a finite string $u$ whose all
possible continuations with letters from $B$ are in $O$ and that is a
contradiction. $O$ is also not closed in $B^\omega$. To see this, note
that the complement of $O$, $\overline{O}$ in $A^\omega$ is the set $XA^\omega$ where $X\subseteq A^*$ is given as $X = \{b, aa, abb, abaa,
\ldots\}$. For $O$ to be closed in $B^\omega$, $\overline{O}$ should
be open in $B^\omega$. This means that there should exist a finite string $v$ whose all
possible continuations with letters from $B$ are in $\overline{O}$ which is again a contradiction.

Thus $O\notin \Sigma_1^0$ and $O\notin \Pi_1^0$ in $B^\omega$ and
hence it is complete for $\Sigma_2^0$ in $B^\omega$.

%% Let $D$ be an arbitrary alphabet such that $|D| > 3$. We give a Wadge
%% reduction from a set $Z$ such that $Z$ is $\Sigma_2^0$ in $D^\omega$
%% to the set $O$ in $B^\omega$. First define $f: D \rightarrow B$ as
%% follows. Pick any two elements $d_1, d_2\in D$ and let $f(d_1) = a,
%% f(d_2)=b$. For every $d\in D\setminus\{d_1,d_2\}$ let $f(d)=c$. $f$ is
%% lifted naturally to $D^\omega$ as - for every $u=u_0u_1u_2\ldots \in
%% D^\omega$ let
%% $$f(u) = f(u_0)f(u_1)f(u_2)\ldots$$
%% Consider any open set $YD^\omega,\ Y\subseteq D^*$. It is easy to
%% check that $f^{-1}(YD^\omega)$ is open in $B^\omega$. Hence $f$ is a
%% continuous map. 

%% We know that \cite{PP} in a metric space, every open
%% set is a countable union of closed sets and every closed set is a
%% countable intersection of open sets. Hence
%% $O=\bigcup_{m<\omega}\bigcap_{n<\omega} O_{m,n}$ where each $O_{i,j}$
%% is open. Thus, 
%% $$f^{-1}(O) =
%% f^{-1}(\bigcup_{m<\omega}\bigcap_{n<\omega} O_{m,n}) =
%% \bigcup_{m<\omega}\bigcap_{n<\omega}f^{-1}(O_{m,n}) = Z$$
%% Hence $O$ is complete for $\Sigma^0_2$ in $B^\omega$.

Next suppose $C$ is a closed set in $A^\omega$. We show how to
represent $C$ in $B^\omega$. Let
  $\tu_\beta$ be the indexed set of prefixes of $C$. Then $C$ can
be represented in $B^\omega$ as
$$C= \bigcap O_B(v),\ v\in \tu_\beta$$
Each $O_B(v)$ is a closed set in $B^\omega$ and hence $C$ being an
arbitrary intersection of closed sets in $B^\omega$ is closed. Thus
$C\in \Pi_1^0$ in $A^\omega$ remains $\Pi_1^0$ in $B^\omega$.
\end{proof}

We generalise the above Lemma to the entire Borel hierarchy in the following theorem.

\begin{theorem}\label{thmjump}
  Let $A$ and $B$ be two alphabets such that $A \subsetneq B$. We have
  the following in the Borel hierarchy:
  \begin{enumerate}
  \item For $1\leq \alpha < \omega$ and $\alpha$ odd, 
    \begin{enumerate}
    \item a set $X\in
      \Sigma_\alpha^0$ in the space $A^\omega$ jumps to
      $\Sigma_{\alpha+1}^0$ in the space $B^\omega$
    \item a set $X\in \Pi_\alpha^0$ in the space $A^\omega$ remains
      $\Pi_\alpha^0$ in the space $B^\omega$.
    \end{enumerate}
  \item For $1\leq \alpha < \omega$ and $\alpha$ even, 
    \begin{enumerate}
    \item a set $X\in
      \Sigma_\alpha^0$ in the space $A^\omega$ remains
      $\Sigma_{\alpha}^0$ in the space $B^\omega$
    \item a set $X\in \Pi_\alpha^0$ in the space $A^\omega$ jumps to
      $\Pi_{\alpha+1}^0$ in the space $B^\omega$.
    \end{enumerate}
\item For $\alpha \geq \omega$, a $\Sigma_\alpha^0$
  (resp. $\Pi_\alpha^0$) set remains $\Sigma_\alpha^0$
  (resp. $\Pi_\alpha^0$) on going from the space $A^\omega$ to
  $B^\omega$. That is, the sets stabilise.
\end{enumerate}
\end{theorem}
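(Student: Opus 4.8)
The strategy is to prove statements (1) and (2) simultaneously by induction on $\alpha$, with Lemma~\ref{lembase} supplying the base case $\alpha = 1$, and then to derive (3) as a consequence of the way the jump propagates. For each inductive step I would handle the ``remains'' half and the ``jumps'' half separately, and within each half do both a positive part (exhibiting a representation of the set in $B^\omega$ at the claimed level) and a completeness part (exhibiting a concrete set witnessing that the level cannot be lowered).

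\medskip
\noindent\textbf{The inductive step for the positive part.} Assume the theorem holds below $\alpha$. Take $\alpha$ odd and $X \in \Sigma_\alpha^0$ in $A^\omega$, so $X = \bigcup_n C_n$ with each $C_n \in \Pi_{\alpha-1}^0$ in $A^\omega$, where $\alpha - 1$ is even. By the induction hypothesis (part (2b)), each $C_n$ jumps to $\Pi_\alpha^0$ in $B^\omega$. A countable union of $\Pi_\alpha^0$ sets is $\Sigma_{\alpha+1}^0$, so $X \in \Sigma_{\alpha+1}^0$ in $B^\omega$; this is exactly the claimed jump. Symmetrically, if $X \in \Pi_\alpha^0$ in $A^\omega$ with $\alpha$ odd, write $\overline X \in \Sigma_\alpha^0$; by the previous case $\overline X$ jumps to $\Sigma_{\alpha+1}^0$ in $B^\omega$, so $X = \overline{\overline X}$ would a priori only be $\Pi_{\alpha+1}^0$ — but here we must \emph{also} argue it stays in $\Pi_\alpha^0$, which requires going back to the representation: write $\overline X = \bigcup_n C_n$ with $C_n \in \Pi_{\alpha-1}^0$ in $A^\omega$, hence (IH) $C_n \in \Pi_\alpha^0$ in $B^\omega$; then $X = \bigcap_n \overline{C_n}$, and each $\overline{C_n}$ is $\Sigma_\alpha^0$ in $B^\omega$, so $X$ is $\Pi_\alpha^0$ in $B^\omega$ — hence (1b). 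The even case is the mirror image, using that $\Sigma_{\alpha-1}^0$ sets ($\alpha - 1$ odd) jump to $\Sigma_\alpha^0$ by (1a), so their countable intersections give $\Pi_\alpha^0$, and (2a)'s ``remains'' follows by complementation in the same style. So the positive half is a bookkeeping induction: the real content is entirely in the base case, which is already proved.

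\medskip
\noindent\textbf{The completeness half and the obstacle.} For each claimed jump we need a concrete set $X \subseteq A^\omega$ that is $\Sigma_\alpha^0$-complete (resp. $\Pi_\alpha^0$-complete) in $A^\omega$ and becomes $\Sigma_{\alpha+1}^0$-complete (resp. $\Pi_{\alpha+1}^0$-complete) in $B^\omega$. The natural candidates are the standard complete sets built by iterating the $ab,abab,\dots$ construction from Lemma~\ref{lembase} (products/joins of lower-level complete sets). By Proposition~\ref{propcomplete}, proving completeness at level $\alpha+1$ in $B^\omega$ amounts to showing $X \notin \Sigma_\alpha^0$ and $X \notin \Pi_\alpha^0$ in $B^\omega$ — and here the induction must be set up so that this non-membership is inherited from the lower-level witnesses together with the presence of the extra letters of $B \setminus A$. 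I expect this to be the main obstacle: one must check that the Wadge-reduction witnessing $A^\omega$-level-$\alpha$-completeness survives alphabet expansion, i.e. that the reducing continuous function still lands in $A^\omega \subseteq B^\omega$ and that no ``shortcut'' at level $\alpha$ opens up in $B^\omega$. The cleanest route is probably to observe that $A^\omega$ is a closed (indeed clopen-in-the-limit) subspace of $B^\omega$ and that the relative topology on $A^\omega$ coincides with its own Cantor topology, so Wadge-completeness in $A^\omega$ transfers, while the jump in level is forced exactly by the closed-set representation $O_A(u) = \bigcap O_B(u')$ of Lemma~\ref{lembase} propagated through the induction.

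\medskip
\noindent\textbf{Stabilisation at $\omega$.} For part (3), note that a $\Sigma_\omega^0$ set is $\bigcup_{n<\omega}\Sigma_n^0 = \bigcup_{n<\omega}\Pi_n^0$; by parts (1) and (2) each constituent at finite level $n$ lands at level $n+1$ in $B^\omega$, which is still below $\omega$, so the union is still $\bigcup_{n<\omega}\Sigma_n^0 = \Sigma_\omega^0$ in $B^\omega$. The same ``shift-by-one is absorbed'' argument, pushed through a transfinite induction on $\alpha \geq \omega$ with limit stages handled by the defining union $\Sigma_\eta^0 = \bigcup_{\beta<\eta}\Sigma_\beta^0$ and successor stages as in the finite case (only now the ``$+1$'' never escapes the class because for infinite $\alpha$, $\Sigma_{\alpha+1}^0$-building blocks are themselves already available inside the class), gives stabilisation. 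The one thing to be careful about is that at $\alpha = \omega$ the ``odd/even'' dichotomy disappears, so the jump simply cannot occur, consistent with the claim; I would state this explicitly as the reason the hierarchy of jumps terminates.
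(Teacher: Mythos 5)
Your positive (membership) half is essentially the paper's own proof: an induction on $\alpha$ that unfolds the definition of $\Sigma_\alpha^0$/$\Pi_\alpha^0$ over $A^\omega$, applies the induction hypothesis to the $\Pi_{\alpha-1}^0$ or $\Sigma_{\alpha-1}^0$ constituents, and reassembles, with Lemma~\ref{lembase} as the base case and the same ``shift by one is absorbed'' observation at $\omega$ and beyond. Two remarks. First, a slip in your (1b) step: after applying the induction hypothesis to the pieces $C_n$ you write $X=\bigcap_n \overline{C_n}$ with each $\overline{C_n}\in\Sigma_\alpha^0$ in $B^\omega$. If the bar denotes complement in $B^\omega$, the set identity is false, since each $C_n\subseteq A^\omega$ and so the intersection equals $X\cup(B^\omega\setminus A^\omega)$; if it denotes complement in $A^\omega$, the class claim does not follow from the induction hypothesis as invoked and needs an extra step (e.g.\ that $A^\omega$ is closed in $B^\omega$, or better, apply part (2a) of the induction hypothesis directly to $A^\omega\setminus C_n\in\Sigma_{\alpha-1}^0$ of $A^\omega$). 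The paper avoids this by performing all complementation inside $A^\omega$ first --- writing a $\Pi_\alpha^0$ set as a countable intersection of $\Sigma_{\alpha-1}^0$ subsets of $A^\omega$ and only then invoking the induction hypothesis on those pieces --- and your argument should be reorganised the same way (the same care is needed in your even case, where the level index in ``their countable intersections give $\Pi_\alpha^0$'' should read $\Pi_{\alpha+1}^0$ in $B^\omega$, and at $\Pi_\omega^0$). Second, the ``completeness half'' that you single out as the main obstacle is not part of the paper's proof of this theorem at all: sharpness is witnessed only at the base level, via the explicit open set $O=\{ab,abab,\ldots\}A^\omega$ of Lemma~\ref{lembase}, shown $\Sigma_2^0$-complete in $B^\omega$ through Proposition~\ref{propcomplete}; no complete sets are propagated up the hierarchy in the inductive step. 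So your sketch there, while a fair reading of what ``jumps'' ought to mean, is neither carried out in your proposal nor required to match the paper's argument.
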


\begin{proof}
The proof is by induction on $\alpha$. For the base case, $\alpha=1$, the result follows from Lemma \ref{lembase}.

  The inductive case is relatively
  straightforward, given the inductive structure of the Borel
  hierarchy. For convenience, we subscript the sets with $A$ or $B$ to
  denote whether they are sets in $A^\omega$ or $B^\omega$
  respectively. 

  Suppose $1< \alpha < \omega$ and $\alpha$ is
  odd. Then
  \begin{align*}
    \Sigma_{\alpha,X}^0 = &\bigcup \Pi^0_{\alpha-1,X} \text{ [by
        definition]}\\
    = &\bigcup \Pi^0_{\alpha,Y} \text{ [by induction hypothesis]}\\
    = &\Sigma_{\alpha+1,Y}^0
  \end{align*}
  
  \begin{align*}
    \Pi^0_{\alpha,X} = & \overline{\Sigma}^0_{\alpha,X}=
    \overline{\bigcup \Pi^0_{\alpha-1,X}} = \bigcap
    \overline{\Pi}^0_{\alpha-1,X} = \bigcap \Sigma^0_{\alpha-1,X} \text{ [by
        definition]}\\
    = & \bigcap \Sigma^0_{\alpha-1,Y} \text{ [by induction hypothetis]}\\
    = & \Pi^0_{\alpha,Y}
  \end{align*}
  
  Now, suppose $1< \alpha < \omega$ and $\alpha$ is
  even. Then
  \begin{align*}
    \Sigma_{\alpha,X}^0 = &\bigcup \Pi^0_{\alpha-1,X} \text{ [by
        definition]}\\
    = &\bigcup \Pi^0_{\alpha-1,Y} \text{ [by induction hypothesis]}\\
    = &\Sigma_{\alpha,Y}^0
  \end{align*}
  
  \begin{align*}
    \Pi^0_{\alpha,X} = & \overline{\Sigma}^0_{\alpha,X}=
    \overline{\bigcup \Pi^0_{\alpha-1,X}} = \bigcap
    \overline{\Pi}^0_{\alpha-1,X} = \bigcap \Sigma^0_{\alpha-1,X} \text{ [by
        definition]}\\
    = & \bigcap \Sigma^0_{\alpha,Y} \text{ [by induction hypothetis]}\\
    = & \Pi^0_{\alpha+1,Y}
  \end{align*}
  
  Finally, 
  $$\Sigma_{\omega,X}^0 = \bigcup_{n<\omega} \Sigma_{n,X}^0
  =\bigcup_{n<\omega} \Sigma_{n,Y}^0 =\Sigma_{\omega,Y}^0$$
  and
  $$\Pi_{\omega,Y}^0 = \overline{\Sigma}^0_{\omega,Y} =
  \Pi^0_{\omega,X}$$
\end{proof}

The above result can be concisely summarised by Figure \ref{fig:res}.

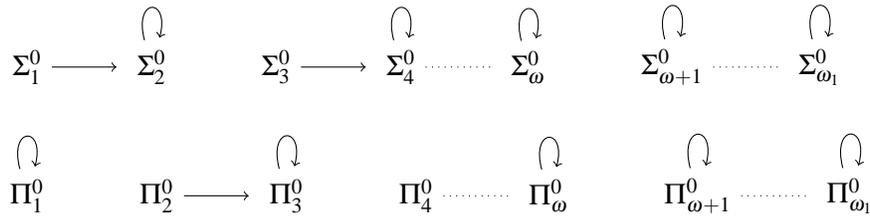
\begin{figure}[h]
  \vspace{0.5cm}
  \begin{center}
  \begin{tikzpicture}[shorten >=4pt,auto]
    % \centering
    \node (s1) {$\Sigma_1^0$};
    \node (s2) [right=of s1] {$\Sigma_2^0$};
    \node (s3) [right=of s2] {$\Sigma_3^0$};
    \node (s4) [right=of s3] {$\Sigma_4^0$};
    \node (so) [right=of s4] {$\Sigma_\omega^0$};
    \node (so1) [right=of so] {$\Sigma_{\omega+1}^0$};
    \node (soo) [right=of so1] {$\Sigma_{\omega_1}^0$};
    
    \node (p1) [below=of s1] {$\Pi_1^0$};
    \node (p2) [right=of p1] {$\Pi_2^0$};
    \node (p3) [right=of p2] {$\Pi_3^0$};
    \node (p4) [right=of p3] {$\Pi_4^0$};
    \node (po) [right=of p4] {$\Pi_\omega^0$};
    \node (po1) [right=of po] {$\Pi_{\omega+1}^0$};
    \node (poo) [right=of po1] {$\Pi_{\omega_1}^0$};
    
    \draw[->] (s1)--(s2);
    \draw[->] (s3)--(s4);
    \draw[->] (p2)--(p3);
    \draw[dotted] (s4)--(so);
    \draw[dotted] (p4)--(po);
    \draw[dotted] (so1)--(soo);
    \draw[dotted] (po1)--(poo);
    
    \path (s2) edge[loop above] (s2);
    \path (s4) edge[loop above] (s4);
    \path (so) edge[loop above] (so);
    \path (so1) edge[loop above] (so1);
    \path (soo) edge[loop above] (soo);
    
    \path (p1) edge[loop above] (p1);
    \path (p3) edge[loop above] (p3);
    \path (po) edge[loop above] (po);
    \path (po1) edge[loop above] (po1);
    \path (poo) edge[loop above] (poo);
  \end{tikzpicture}
  \caption{Jumps in the Borel hierarchy}
  \label{fig:res}
\end{center}
\end{figure}

\section{Applications}\label{seccon}
The result we showed has interesting consequences in the fields of
both formal verification and linguistics.
\subsection{Formal verification}
As we mentioned in the introduction, to formally verify a reactive
system $M$ (a piece of hardware or software which interacts with
users/environment), we often model the system as a finite graph
$G(M)$. Two players, the system player and the environment player then
play an infinite game on $G(M)$. % The winning set $\win$ 
% is specified using formulas in some logic, LTL, CTL, $\mu$-calculus
% etc.
The goal of the system player is to meet a certain specification on
all plays on $G(M)$ and
that of the environment player is to exibit a play which does not meet
it.%  To verify the system then amounts to show that the system player
% has a winning strategy in the underlying game and to find this
% strategy.

The result stated in this paper represents situations where the
system player is unsure about the exact moves of the environment
player. This shows that in such a situation, the system player might
have to strategise at a higher level of the hierarchy in order to
account for this uncertainty.

It can also be used to represent situations where the underlying model
might change (expand). Let $M$ be the original system and $M'$ be the
expanded system (which is generated from $M$ by the addition of a module say). If the objective of the system player in $G(M)$ was to
reach one of the states in some subset $R$ of $G(M)$ (reachability) then it is
enough for her to play positionally. However, in the bigger graph $G(M')$
she not only has to reach $R$ but also has to stay within the states
of the original graph $G(M)$ in order to achieve the same objective. This
is the Muller objective which is a level higher.

\vspace{.1in}
\noindent
{\bf Example 2.}
Consider the example shown in Figure \ref{fig:reachmul}. Player 0 nodes have been depicted as $\bigcirc$ and Player 1 nodes as $\Box$. Suppose initially the system is $M$ and the objective of Player 1 in $G(M)$ is to reach $v_3$. Then the winning set is the set of all sequences in $V=\{v_0,v_1,v_2,v_3\}$ in which $v_3$ occurs in some position. That is, $\win=\{u\ |\ \exists i,\ u(i)=v_3\}$. This is a reachability condition where the reachability set $R=\{v_3\}$. To win, Player 0 can either play $v_1$ or $v_2$ from $v_0$ and hence both these strategies are winning strategies for her. Now suppose the system expands to $M'$ where, in $G(M')$, it is possible for Player 1 to go to the new node $v_4$ from $v_1$. Also suppose $\win$ remains the same. Then $\win$ is no longer a reachability condition because then it would also include sequences involving the vertex $v_4$. It is rather a Muller condition where the Muller set $\muller =\{\{v_0,v_1,v_2,v_3\}\}$. However, note that Player 0 does not have a winning strategy in this game. That is because to win, she has to visit vertex $v_1$ infinitely often from which Player 1 can force the play through $v_4$ infinitely often.
\vspace{.1in}

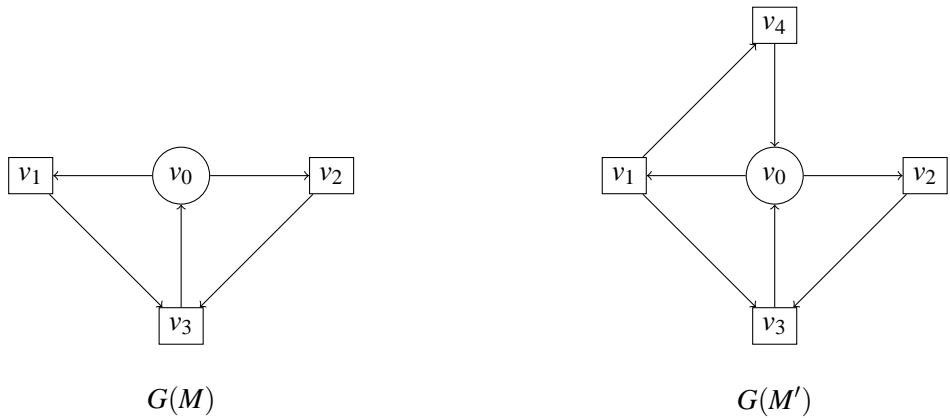
\begin{figure}[ht]
  \begin{minipage}[b]{0.45\linewidth}
    \centering
    \begin{tikzpicture}
      \node (v0) at (0,0) [circle,draw] {$v_0$};
      \node (v1) at (-2,0) [rectangle,draw] {$v_1$};
      \node (v2) at (2,0) [rectangle,draw] {$v_2$};
      \node (v3) at (0,-2) [rectangle,draw] {$v_3$};
      \node at (0,-3){$G(M)$};
      
      \draw[->] (v0) to (v1);
      \draw[->] (v0) to (v2);
      \draw[->] (v3) to (v0);
      \draw[->] (v1) to  (v3);
      \draw[->] (v2) to  (v3);
      %% \draw[->] (v3) to [bend right=30] (v1);
      %% \draw[->] (v3) to [bend left=30] (v2);
    \end{tikzpicture}
  \end{minipage}
\hspace{0.5cm}
\begin{minipage}[b]{0.45\linewidth}
\centering
\begin{tikzpicture}
\node (v0) at (0,0) [circle,draw] {$v_0$};
\node (v1) at (-2,0) [rectangle,draw] {$v_1$};
\node (v2) at (2,0) [rectangle,draw] {$v_2$};
\node (v3) at (0,-2) [rectangle,draw] {$v_3$};
\node (v4) at (0,2) [rectangle,draw] {$v_4$};
\node at (0,-3){$G(M')$};

\draw[->] (v0) to (v1);
\draw[->] (v0) to (v2);
\draw[->] (v3) to (v0);
\draw[->] (v4) to (v0);
\draw[->] (v1) to (v3);
\draw[->] (v2) to (v3);
%% \draw[->] (v3) to [bend right=30] (v1);
%% \draw[->] (v3) to [bend left=30] (v2);
\draw[->] (v1) to (v4);
% \draw[->] (v0) [loop below] ();
\end{tikzpicture}
\end{minipage}
\caption{Jump from reachability to Muller}
\label{fig:reachmul}
\end{figure}

\subsection{Linguistics}
In \cite{AP12} we demonstrated what seems to be a compelling
similarity between human conversations and Banach-Mazur games. We showed how various conversational objectives
correspond to various levels of the Borel hierarchy and how strategies
of increasing complexity are called for to attain such
objectives. Our result shows that when Player 1 is unsure about what
Player 2 might say, it might be wise for her to strategise at a higher
level to account for this uncertainty. She
engages in a conversation, believing she is equipped with a strategy
for all the situations the other player might put her into when
suddenly the other player says something and she is left
dumbfounded.  

An example which still sticks in the memory of one of the authors after almost 20 years is
the memorable line by Senator Lloyd Bentsen in his Vice-Presidential debate with Dan Quale in 1984.  Quayle's strategy
in the debate was to counter the perception that he was too inexperienced to have the job, and he did this by drawing similarities
between his political career and former President John Kennedy's.   Quayle seemed to be doing a good job in achieving his objective or winning condition, when
Bentsen interrupted and said:
\begin{quote}
Sir, I knew Jack Kennedy.  I knew Jack Kennedy.  And you, sir, are no Jack Kennedy.
\end{quote}
Quayle's strategy at that point fell apart.  He had no effective come back and by all accounts lost the debate handily. 

The way we model this as follows.   Building on \cite{AP12}, we take each move in a game to be a discourse which may be composed of several, even
many clauses.  Abstractly, we consider such discourses as sequences of basic moves, which we will be the alphabet. In a situation of incomplete information about
the discourse moves,  the set of moves (or the alphabet) of the Banach
Mazur game being played by the players is different for the two
players. Player 0 has an alphabet $A$ (say) while Player 1 has an
alphabet $B$ such that $A\subsetneq B$. Player 0 may or may not be
aware of this fact.    

Thus, from the point of view of Player 0, if she is playing a
Banach-Mazur game where she is unsure of the set of moves available to
Player 1, it is better for her to strategise in such a way so as to
account for this jump in the winning set. In other words, if Player
0's winning condition is at a level $n$ (say) of the hierarchy, she is
better off strategising for level $n+1$ given that she is unsure of
Player 1's moves and given that a set at level $n$ might undergo a
jump to level $n+1$. Thus Quayle might have even won the debate had he strategiesed at a higher level expecting the unexpected.
%% For applications to strategic conversation, our result shows that when Player 0 is unsure about what
%% Player 1 might say, it would be wise for her to strategise at a higher
%% level to account for this uncertainty.  An example which still sticks in the memory of one of the authors after almost 20 years is
%% the memorable line by Senator Lloyd Bentsen in his Vice-Presidential debate with Dan Quale in 1984.  Quayle's strategy
%% in the debate was to counter the perception that he was too inexperienced to have the job, and he did this by drawing similarities
%% between his political career and former President John Kennedy's.   Quayle seemed to be doing a good job in achieving his winning condition, when
%% Bentsen interrupted and said:
%% \begin{quote}
%% Sir, I knew Jack Kennedy.  I knew Jack Kennedy.  And you, sir, are no Jack Kennedy.
%% \end{quote}
%% Quayle's strategy at that point fell apart.  He had not at all anticipated Bentsen's move.  Had he chosen a more complex level at which to strategize he might have avoided this problem.

\bibliographystyle{eptcs}
\bibliography{References}

%% \appendix
%% \input{Appendix}

\end{document}